\newtheorem{theorem}{Theorem}[section]
\theoremstyle{definition}
\newtheorem{definition}{Definition}[section]
\newcommand{\norm}[1]{||#1||}
\newcommand{\onenorm}[1]{||#1||_1}
\newcommand{\R}{\mathbb{R}}
\newcommand{\hkpr}{\rho_{t,f}}
\newcommand{\hkapprox}{\hat{\rho}_{t,f}}
\renewcommand{\L}{\mathcal{L}}
\renewcommand{\H}{\mathcal{H}}
\newcommand{\hkpralg}{\texttt{ApproxHKPR}}
\newcommand{\hkpralgparams}{\texttt{ApproxHKPR($G,t,f,\epsilon$)}}
\newcommand{\rparam}{\frac{16}{\epsilon^3}\log n}
\newcommand{\kparam}{\frac{\log(\epsilon^{-1})}{\log\log(\epsilon^{-1})}}
\newcommand{\hkprcomplexity}{O\big(\frac{\log(\epsilon^{-1})\log n}{\epsilon^3\log\log(\epsilon^{-1})}\big)}
\newcommand{\consensusalg}{\textsc{AvgConsensus}}
\newcommand{\greensalg}{\texttt{GreensSolver}}
\newcommand{\greenscomplexity}{O\Big(\frac{ (\log s)^2(\log(\epsilon^{-1}))^2 }{ \epsilon^5\log\log(\epsilon^{-1}) }\Big)}
\newcommand{\lfconsensusalg}{\textsc{LFConsensus}}
\newcommand{\rparamsolver}{\frac{\log s + \log(\epsilon^{-1})}{\epsilon^2}}
\newcommand{\tparamsolver}{s^3\log(1/\epsilon)}
\title{\LARGE \bf
Finding Consensus in Multi-Agent Networks Using Heat Kernel Pagerank
}
\author{Olivia Simpson$^{1}$ and Fan Chung$^{2}$% <-this % stops a space
%\thanks{*This work was not supported by any organization}% <-this % stops a space
\thanks{$^{1}$Dept. of Computer Science,
        University of California, San Diego,
        La Jolla, CA
        {\tt\small osimpson@ucsd.edu}}%
\thanks{$^{2}$Dept. of Mathematics,
        University of California, San Diego,
        La Jolla, CA
        {\tt\small fan@ucsd.edu}}%
}
\begin{document}

\maketitle
\thispagestyle{empty}
\pagestyle{empty}

%%%%%%%%%%%%%%%%%%%%%%%%%%%%%%%%%%%%%%%%%%%%%%%%%%%%%%%%%%%%%%%%%%%%%%%%%%%%%%%%
\begin{abstract}

We present a new and efficient algorithm for determining a consensus value for a network of agents.
Different from existing algorithms, our algorithm evaluates the consensus value for very large networks using heat kernel pagerank.
We consider two frameworks for the consensus problem, a weighted average consensus among all agents, and consensus in a leader-following formation.
Using a heat kernel pagerank approximation, we give consensus algorithms that run in time sublinear in the size of the network, and provide quantitative analysis of the tradeoff between performance guarantees and error estimates.

\end{abstract}

%%%%%%%%%%%%%%%%%%%%%%%%%%%%%%%%%%%%%%%%%%%%%%%%%%%%%%%%%%%%%%%%%%%%%%%%%%%%%%%%
\section{Introduction}
The problem of consensus among multi-agent systems has wide applications in situations where members of a distributed network must agree.  
For example, the communication, feedback, and decision-making between distinct unmanned aerial vehicles (UAVs)~\cite{om:graphridigity:02,fm:cooperationvehicle:04} is closely related to the consensus problem.
In addition to UAVs, the distributed coordination of networks has important implications in cooperative control of distributed sensor networks~\cite{cb:dynamicalsystems:05}, flocking and swarming behavior~\cite{tt:flocks:98}, and communication congestion control~\cite{pdl:congestion:01}.
Further, they form the foundation of the field of distributed computing~\cite{lynch:distributed:96}.
The consensus problem is studied in~\cite{om:consensusdynamic:03}, and several variations and extensions are examined by~\cite{om:switchingtop:04,nc:leaderfollower:10,mtll:dynamicthermal:11,cht:highorder:13}.  

We consider the classical model (see~\cite{om:consensusdynamic:03}) of agents with fixed, bidirectional communication channels and associated state.
State changes occur continuously, influenced by communication with neighbors.
A consensus algorithm is a continuous time protocol that specifies the information exchange between agents and provides a mechanism for systematically computing the consensus value, a unanimous state and an equilibrium of the system.
In this paper, we focus on an efficient method for approximating the state values in a network in which agents reach consensus by following a linear protocol.
We give algorithms for two different frameworks.
The first computes a global consensus value involving all the agents in the network and runs in time sublinear in the size of the network.
The second is a local algorithm to compute a consensus value for a subset of agents under external influence, and runs in time sublinear in the size of the specified subset.
In both, the consensus value returned is within an error bound of $O(\epsilon)$ for a given $0<\epsilon<1$.

Our algorithm involves solving a linear system by approximating the heat kernel pagerank of the network and relies on spectral analysis.
The tools we present are relevant to numerous graph problems, including partitioning and clustering algorithms~\cite{csz:spectralkway:94,shi2000normalized,njw:spectralcluster:02}, flow and diffusion modeling~\cite{raj2012network}, electrical network theory~\cite{ckmst:electric:10}, and regression on graphs~\cite{cai2007spectral}.

\subsection{Previous Work}
\label{sec:previouswork}
\paragraph{The consensus problem}
In~\cite{om:consensusdynamic:03}, Olfati-Saber and Murray design a linear protocol for agents to reach a consensus value which is an average of initial states.
They consider a network of agents as an undirected graph and use the Laplacian potential, defined in terms of the graph Laplacian (to be defined in Section~\ref{sec:prelimlaplacian}), as a measure of disagreement among nodes.
With this tool, they transform the the problem of reaching consensus to that of minimizing the Laplacian potential.

An alternate formulation given in~\cite{fm:cooperationvehicle:04} abides by a linear protocol which favors the values of more highly connected nodes.
In this way, agents which are more visible will have more of an impact on the group decision.
Yet another variation is consensus in a leader-following formation, in which a set of agents called leaders abide by individual protocol but continue to influence to rest of the network.
This problem has been studied in~\cite{rmm:controlgraphtheory:09,nc:leaderfollower:10}.

\paragraph{Laplacian linear systems}
Fast methods for solving systems of linear equations gained awareness with the nearly-linear time solver of~\cite{st:nearlylinear:04}.
Their algorithm implements a recursive procedure for sparsifiying a graph related to the coefficient matrix so that solving the system is easy at the base of the recursion.
This work was improved in~\cite{kmp:optimalsdd:10,kmp:mlognsdd:11} with a higher quality sparsifier which reduced the depth of recursion.
A parallel solver for SDD systems is given in~\cite{ps:parallelsdd:13} which runs in polylogarithmic time and nearly-linear work, an improvement to previous bounds.

The methods in this paper are closely related to previous work on approximating the discrete Green's function (or pseudo-inverse of the Laplacian matrix)~\cite{cs:hklinear:13}.
They give an algorithm for solving Laplacian linear systems with a boundary condition on a subset of vertices that improves previous time bounds by using the method of heat kernel.

\subsection{Main Results}
In the model we consider, the communication protocol followed by the agents in the network forms a linear system of equations, and the solution to the linear system is the state of the network as a function of time.
Thus, computing the consensus value involves solving a linear system.

We consider two forms of consensus.
In the group consensus framework, we seek a consensus value that is a weighted average of intial states of the system, with weights proportional to node degrees.
In this case, our algorithm computes the consensus value by approximating the state vector corresponding to the equilibrium of the system in sublinear time.
In the local framework, a subset of agents imposes an external influence on an adjacent subset.
The consensus achieved in this case is referred to as a leader-following consensus.
Our algorithm for computing leader-following consensus on the subset of followers involves sampling vectors that approximate the equilibrium state vector, and runs in sublinear time.

Specifically, our contributions are:
\begin{enumerate}
  \item
We give a new algorithm for approximating the state of a system in a weighted average consensus framework to within a multiplicative factor $(1+\epsilon)$ and additive term $O(\epsilon)$ in time $\hkprcomplexity$ where $n$ is the size of the network.
We call this algorithm \consensusalg~and present it in Section~\ref{sec:consensusapprox}.
  \item
We give a new algorithm for approximating the state of a subset of agents in a leader-following consensus framework to within a multiplicative factor $(1+\epsilon)$ and additive term $O(\epsilon)$ in time $\greenscomplexity$, where $s$ is the size of the subset of followers.
We call this algorithm \lfconsensusalg~and present it in Section~\ref{sec:leaderfollow}.
\end{enumerate}

Our sublinear time algorithms for computing consensus value rely on the efficiency of an algorithm for approximating heat kernel pagerank.
Heat kernel pagerank is introduced in detail in~\cite{chung:hkpr:07} and~\cite{chung:partitionhkpr:im09} as a variant of Personalized PageRank~\cite{bp:websearchanatomy:isdn98}.
The heat kernel pagerank approximator is introduced Section~\ref{sec:consensusapprox} and in more detail in~\cite{cs:hklinear:13}.

\section{Preliminaries}
\label{sec:prelims}

\subsection{Networked Multi-Agent Systems}
A dynamic multi-agent system is given by a tuple $G_x = (G,x)$ where $x$ is the state of the system and $G$ is the communication network topology, represented by a graph.
Namely, each agent is represented by a node and the communication network between agents is represented by the edge set $E(G)$.
Let $x_i \in \mathbb{R}$ be a real scalar value assigned to $v_i$ such that $x(t) = (x_1(t),\ldots, x_n(t))^T$ denotes the state at time $t$. 

For an undirected graph $G = (V,E)$ of size $|V| = n$, let the nodes of $V$ be arbitrarily indexed by index set $\mathcal{I} = \{1,\ldots, n\}$ such that $E \subset \mathcal{I} \times \mathcal{I}$.  
For a node $v_i\in V$, let $N_i = \{v_j \in V | (i, j) \in E\}$ be the set of \emph{neighbors} of $v_i$ and let $d_i = |N_i|$ be the \emph{degree} of $v_i$.  
Two nodes $v_i, v_j$ are said to \emph{agree} if and only if $x_i=x_j$.  The goal of consensus is to minimize the total disagreement among nodes.

\begin{definition}[Consensus]
Let the value of nodes $x$ be the solution to the equation
\begin{equation}\label{eq:autonomoussystem}
\dot{x} = f(x, u),~ x(0) \in \mathbb{R}^n.
\end{equation}
Let $\chi : \mathbb{R}^n \rightarrow \mathbb{R}$ be an operator on $x = (x_1,\ldots, x_n)^T$ that generates a decision value $\chi(x)$.  
Then we say all nodes of the graph have reached \emph{consensus with respect to $\chi$} in finite time $T > 0$ if and only if all nodes agree and $x_i(T) = \chi(x(0)) ~\forall~ i\in\mathcal{I}$.
We call $\chi(x) := \chi(x(0))$ the \emph{consensus value}.
\end{definition}

One notion of consensus is a \emph{weighted average consensus}, given by
\begin{equation*}%\label{eq:weightedavgcons}
\chi_w(x) = \frac{\sum_i d_i x_i}{\sum_i d_i}.
\end{equation*}
We show (Theorem~\ref{thm:consheateq}) that any connected undirected graph globally asymptotically reaches weighted average consensus when each node applies the distributed linear protocol
\begin{equation}\label{eq:weightedavgprot}
u_i(t) = 1/d_i \sum\limits_{j \in N_i}(x_j(t)-x_i(t)).
\end{equation}

We assume $G$ is connected for the remainder of the paper.

\subsection{Graph Laplacians and Heat Kernel}
\label{sec:prelimlaplacian}
In this work, we consider graphs which are weight-normalized so that every entry of the weighted adjacency matrix $A$ is $a_{ij} \in \{0,1\}$, and the unordered pair $(i,j) \in E(G)$ if and only if $a_{ij} = 1$.  
Let $D$ denote the diagonal degree matrix $D(i,i) = d_i$.

The \emph{Laplacian} is defined $\L = D^{-1/2}(D-A)D^{-1/2}$.  
Let $\Delta$ be the graph matrix $\Delta = I-D^{-1}A$.
We call $\Delta$ the \emph{Laplace operator}.  
We note that $\Delta$ is similar to the matrix $\L$.\footnote{The Laplacian used in~\cite{om:consensusdynamic:03} is the matrix $L = D-A$, a common variation.} 

The \emph{heat kernel} of a graph is a solution to the heat differential equation
\[
\frac{\partial u}{\partial t} = -\Delta u.
\]
The heat kernel can be formulated in the context of random walks on graphs.  
Consider the transition probability matrix associated to a random walk given by $P = D^{-1}A$.  
Then heat kernel is defined:
\begin{align}
H_t &= e^{-t\Delta} = \sum\limits_{k=0}^{\infty}\frac{(-t)^k}{k!}\Delta^k\label{eq:hkdelta}\\
&= e^{-t(I-P)}  = \sum\limits_{k=0}^{\infty}e^{-t}\frac{t^k}{k!}P^k\label{eq:hkw}.
\end{align}

The following similarity of the heat kernel, $\H_t$, is of interest for its symmetry.
Using definition (\ref{eq:hkdelta}),
\begin{align*}
\H_t %&= D^{1/2}e^{-t\Delta}D^{-1/2}\\
&= \sum_{k=0}^{\infty}\frac{t^k}{k!}D^{1/2}\Delta^kD^{-1/2}\\
&= \sum_{k=0}^{\infty}\frac{t^k}{k!}D^{1/2}(D^{-1/2}\L D^{1/2})^kD^{-1/2}\\
&= \sum_{k=0}^{\infty}\frac{t^k}{k!}\L^k
= e^{-t\L}.
\end{align*}

Heat kernel pagerank is a row vector determined by two parameters; $t\in\R^+$, and a preference row vector $f\in\R^n$. 
It is given by the following equation:
\[
\hkpr = fH_t = \sum\limits_{k=0}^{\infty} e^{-t}\frac{t^k}{k!}fP^k.
\]

Specifically, it is an exponential sum of random walks generated from a starting vector, $f\in \mathbb{R}^n$.  
As an added benefit, heat kernel pagerank simultaneously satisfies the heat equation with the rate of diffusion controlled by the parameter $t$.  
Both properties are powerful tools in consensus problems.

\section{Heat Kernel Pagerank for Weighted-Average Consensus}
\label{sec:consensus}
In this section we present a linear consensus protocol for a dynamic network and show how to compute a weighted average consensus for the protocol using heat kernel pagerank.

We first recall some principles of control theory.
Consider the system with controls as in (\ref{eq:autonomoussystem}).
A point $x_e$ is an \emph{equilibrium point} of the system if $f(x_e,u) = 0$, and $x_e$ is an equilibrium point if and only if $x(t) = x_e$ is a trajectory.
The system is \emph{globally asymptotically stable} if, for every trajectory $x(t)$, $x(t) \rightarrow x_e$ as $t\rightarrow\infty$.
To check this, two conditions are sufficient.

\begin{definition}[Global asymptotic stability]\label{def:gas}
A system is \emph{globally asymptotically stable} if
\begin{enumerate}
\item it is stable in the Lyapunov sense, and\label{condition:lyapunov}
\item the equilibrium $x_e$ is convergent, i.e., for every $\epsilon > 0$, there is some time $T$ such that 
\[
\norm{x(0)-x_e} < \delta ~~~\mbox{ means }~~~ \norm{x(t)-x_e} < \epsilon
\]
for every time $t>T$.\label{condition:convergent}
\end{enumerate}
\end{definition}
In particular, when considering a time-invariant linear state space model $\dot{x} = -Mx$, for some matrix $M$, condition \ref{condition:lyapunov} is satisfied if $M$ is positive semidefinite.

\subsection{Consensus and the Laplacian}
Consider the network of integrator agents with dynamics $\dot{x}_i = u_i$ where each agent applies the distributed linear protocol (\ref{eq:weightedavgprot}).
We can characterize the dynamics of the system by the Laplace operator for the underlying graph, as described by the following theorem:

\begin{theorem}
\label{thm:consheateq}
Let $G_x$ be a dynamic multi-agent system and suppose each node of $G$ applies the distributed linear protocol (\ref{eq:weightedavgprot}).
%\[
%u_i(t) = 1/d_i \sum\limits_{j \in N_i}(x_j(t) - x_i(t)).
%\]
Then the value of $x$ at time $t$ is given by the solution to the system
\begin{align}
\dot{x}(t) = -\Delta x(t), ~~x(0)\in \mathbb{R}^n. \label{eq:consensus}
\end{align}
Additionally, this protocol globally asymptotically reaches a weighted average consensus.
\end{theorem}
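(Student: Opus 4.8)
The plan is to establish the two claims in turn: first derive the closed matrix form of the dynamics, then analyze its long-time behavior through the spectral structure of the Laplacian.

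First I would obtain equation~(\ref{eq:consensus}) by direct substitution. Since the agents are integrators, $\dot{x}_i = u_i$, and inserting the protocol~(\ref{eq:weightedavgprot}) yields $\dot{x}_i = \frac{1}{d_i}\sum_{j\in N_i}(x_j - x_i) = \frac{1}{d_i}\sum_{j\in N_i}x_j - x_i$. Identifying $\frac{1}{d_i}\sum_{j\in N_i}x_j$ with the $i$-th coordinate of $D^{-1}Ax$ and recalling $\Delta = I - D^{-1}A$, each coordinate equals $-(\Delta x)_i$, so in vector form $\dot{x} = -\Delta x$, with solution $x(t) = e^{-t\Delta}x(0) = H_t x(0)$ by~(\ref{eq:hkdelta}).

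For the convergence claim I would check the two conditions of Definition~\ref{def:gas}. Although $\Delta$ is not symmetric, it is similar to $\L$ through $\Delta = D^{-1/2}\L D^{1/2}$, so its spectrum equals that of the symmetric positive semidefinite $\L$; hence every eigenvalue of $\Delta$ is real and nonnegative, which gives Lyapunov stability. To study convergence I would symmetrize via $e^{-t\Delta} = D^{-1/2}e^{-t\L}D^{1/2} = D^{-1/2}\H_t D^{1/2}$ and use the spectral decomposition of $\L$. Because $G$ is connected, the eigenvalue $\lambda_0 = 0$ is simple with normalized eigenvector $\phi_0 = D^{1/2}\mathbf{1}/\sqrt{\vol(G)}$ (which I would verify satisfies $\L\phi_0 = 0$), while all remaining eigenvalues are strictly positive.

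Letting $t\to\infty$, every mode $e^{-t\lambda_i}$ with $\lambda_i > 0$ decays, so $\H_t \to \phi_0\phi_0^T$ and therefore $x(t) \to D^{-1/2}\phi_0\phi_0^T D^{1/2} x(0) = \frac{1}{\vol(G)}\mathbf{1}\mathbf{1}^T D\, x(0)$. Every coordinate of this limit equals $\frac{\sum_j d_j x_j(0)}{\sum_j d_j} = \chi_w(x(0))$, which is simultaneously a uniform state and the weighted average consensus value, establishing both convergence and the identity of the equilibrium. I expect the main obstacle to be the asymmetry of $\Delta$, which is what forces the detour through the similarity with $\L$; once the spectrum is transferred to the symmetric setting, the exponential decay of the nonzero modes and the identification of the limiting rank-one projection follow routinely from connectedness.
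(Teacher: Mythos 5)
Your proposal is correct, and it diverges from the paper's proof in one substantive way: the mechanism used to identify the consensus value and establish convergence. The paper argues by a conservation law: it characterizes the equilibrium set as $\{\alpha\mathbf{1}\}$ via the null space of $\Delta$, observes that the weighted average $\chi_w(x(t))$ is invariant along trajectories (because the degree-weighted sum of the $u_i$ vanishes by antisymmetry of the pairwise terms), and concludes $\alpha = \chi_w(x(0))$; convergence itself is then deduced rather tersely by appealing to Definition~\ref{def:gas} together with positive semidefiniteness. You instead compute the long-time limit explicitly: transferring to the symmetric $\L$ via $e^{-t\Delta} = D^{-1/2}\H_t D^{1/2}$, using simplicity of the zero eigenvalue with eigenvector $\phi_0 = D^{1/2}\mathbf{1}/\sqrt{\vol(G)}$, and letting the positive modes decay to obtain the rank-one projection $\frac{1}{\vol(G)}\mathbf{1}\mathbf{1}^T D$, whose action on $x(0)$ gives $\chi_w(x(0))$ in every coordinate. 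Your route buys a more self-contained and quantitative convergence argument (it exhibits the limit and its rate $e^{-t\lambda_1}$ directly, which is what the paper later uses when choosing $T = O(1/\lambda_1)$), at the cost of invoking the full spectral decomposition; the paper's conservation argument is lighter and identifies the consensus value without computing the projector, but leaves the actual convergence to the equilibrium more implicit. One small point in your favor: the paper states $\sum_i u_i = 0$, whereas the quantity that is actually conserved under protocol~(\ref{eq:weightedavgprot}) is the degree-weighted sum $\sum_i d_i u_i = \sum_i\sum_{j\in N_i}(x_j - x_i) = 0$; your projection computation sidesteps this entirely and lands on the correct weighted average.
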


\begin{proof}
Let $x_e$ be an equilibrium of the system $\dot{x} = -\Delta x$.
Then by definition of equilibrium, $\Delta x_e = 0$ and therefore $x_e$ is a right eigenvector associated to the eigenvalue $\lambda=0$.
In particular $x_e$ is in the null space of $\Delta$.
Since $G$ is connected, $\Delta$ has exactly one zero eigenvalue.
Upon consideration, we see that the corresponding eigenvector is $\mathbf{1}$, the all-one's vector, as the row sums of $\Delta$ are all exactly zero.
Thus, $x_e = \alpha\mathbf{1}$ for some $\alpha\in\mathbb{R}$.
Now, note that $\sum_i u_i = 0$ for the protocol (\ref{eq:weightedavgprot}), and so the weighted average value $\chi_w(x(t))$, determined by $u(t)$, is in fact invariant with respect to $t$.
In other words, $\chi_w(x(0)) = \chi_w(x_e)$, and
\[
\chi_w(x_e) = \frac{\sum_i d_i (x_e)_i}{\sum_i d_i} = \alpha.
\]
Therefore this equilibrium is in fact the weighted average of the initial values of the nodes, and all nodes reach this value.
Also, as the system is time-invariant, the system is stable since $\Delta$ is positive semidefinite.% by Lemma~\ref{lem:spectrum}.

By Definition~{\ref{def:gas}}, the Theorem is proved.
\end{proof}

Now we can summarize the state of the system with a single heat kernel pagerank vector.

\begin{theorem}\label{thm:maintruesolution}
Let $G_x$ be a dynamic multi-agent system and suppose each node of $G$ applies the distributed linear protocol (\ref{eq:weightedavgprot}).
%\[
%u_i(t) = 1/d_i \sum\limits_{j \in N_i}(x_j(t) - x_i(t)).
%\]
Let $D$ be the diagonal degree matrix of $G$.
Then the state of the system is given by
\begin{equation}\label{eq:weightedavghkpr}
x(t) = \rho_{T, f}D^{-1}, ~~~f=x(0)^{tr}D,
\end{equation}
where $M^{tr}$ denotes the transpose.
\end{theorem}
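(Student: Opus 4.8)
The plan is to combine the explicit ODE solution from Theorem~\ref{thm:consheateq} with the symmetry of the normalized heat kernel. By Theorem~\ref{thm:consheateq}, the state obeys $\dot{x}(t) = -\Delta x(t)$, a time-invariant linear system whose unique solution with initial condition $x(0)$ is the matrix exponential $x(t) = e^{-t\Delta}x(0) = H_t x(0)$, using the definition (\ref{eq:hkdelta}) of the heat kernel. This is a column vector, so the entire task reduces to reconciling it with the row-vector quantity $\rho_{T,f}D^{-1}$ appearing in (\ref{eq:weightedavghkpr}).

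First I would expand the right-hand side using the definition of heat kernel pagerank. With $f = x(0)^{tr}D$ we have $\rho_{T,f} = fH_T = x(0)^{tr}DH_T$, and hence $\rho_{T,f}D^{-1} = x(0)^{tr}DH_TD^{-1}$. Transposing the target solution gives $(H_Tx(0))^{tr} = x(0)^{tr}H_T^{tr}$, so the theorem is equivalent to the single matrix identity $DH_TD^{-1} = H_T^{tr}$, with the evaluation time $t$ identified with the parameter $T$.

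The key step is to prove this identity, and for that I would invoke the similarity computation already carried out in Section~\ref{sec:prelimlaplacian}, namely $\H_T = D^{1/2}H_TD^{-1/2} = e^{-T\L}$. Since $\L$ is symmetric, $\H_T$ is symmetric, so $\H_T^{tr} = \H_T$. Writing out $\H_T^{tr} = D^{-1/2}H_T^{tr}D^{1/2}$ and equating it to $\H_T = D^{1/2}H_TD^{-1/2}$, then multiplying by $D^{1/2}$ on the left and by $D^{1/2}$ on the right, yields exactly $H_T^{tr} = DH_TD^{-1}$. Substituting back gives $\rho_{T,f}D^{-1} = x(0)^{tr}H_T^{tr} = (H_Tx(0))^{tr}$, which is the transpose of the claimed state vector $x(t) = H_Tx(0)$.

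I expect the main obstacle to be purely bookkeeping: heat kernel pagerank is defined as a \emph{row} vector acting on $H_T$ from the left, whereas the ODE solution is naturally a \emph{column} vector, so the argument must track transposes carefully and rely on the symmetry of $\H_T$ to pass between the two conventions. The degree-weighting in $f = x(0)^{tr}D$ is precisely what compensates for the asymmetry of $\Delta$ (equivalently of $P = D^{-1}A$), and recognizing this is the conceptual crux rather than any nontrivial estimate.
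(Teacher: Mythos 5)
Your proposal is correct and follows essentially the same route as the paper: both start from $x(t)=e^{-t\Delta}x(0)=H_tx(0)$ and then use the symmetry of $\H_t = D^{1/2}H_tD^{-1/2}=e^{-t\L}$ to convert the column-vector solution into the row-vector form $x(0)^{tr}DH_tD^{-1}=\rho_{t,f}D^{-1}$ with $f=x(0)^{tr}D$. The only difference is presentational — you isolate the identity $H_T^{tr}=DH_TD^{-1}$ explicitly, while the paper substitutes the symmetrized kernel and transposes directly — and your conjugation step, read carefully, gives $H_T^{tr}D=DH_T$, which is the same identity.
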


\begin{proof}
The solution to (\ref{eq:consensus}) is the evolving state of the system.
This solution is
\begin{equation}\label{eq:consvalueexp}
x(t) = e^{-t\Delta}x(0) = H_tx(0).
\end{equation}
Using the symmetrized version of heat kernel,
\begin{align}
x(t) &= (D^{-1/2}\H_t D^{1/2})x(0)\nonumber\\
x(t)^{tr} &= x(0)^{tr}D^{1/2}(D^{1/2}H_t D^{-1/2})D^{-1/2}\label{line:transpose}\\
x(t)^{tr} &= (x(0)^{tr}D)H_tD^{-1}\nonumber,
\end{align}
where line~\ref{line:transpose} uses the symmetry of $D$ and $\H$.
Thus, the values $x(t)$ given by (\ref{eq:consvalueexp}) are related to the heat kernel pagerank vector $\hkpr$ with preference vector $f = x(0)^{tr}D$.
\end{proof}

To compute the equilibrium state at which all agents reach consensus, we know that time $T = O(1/\lambda_1)$ is an upper bound.
Figure~\ref{fig:tconverge} depicts the results of computing weighted average consensus with heat kernel pagerank as in Theorem~\ref{thm:maintruesolution} with different values for $t$.
The network is an undirected social network of dolphins~\cite{dolphins} with initial state values randomly chosen from the interval $(0,1)$ (Figure~\ref{fig:dolphins}).
The chart plots total disagreement $\norm{\delta}$ for disagreement vector $\delta(t) = x(t) - \chi_w(x)\mathbf{1}$, where $x(t) = \hkpr D^{-1}$ for $f=x(0)^{tr}D$.
The vertical line corresponds to $t=1/\lambda_1$.

\begin{figure*}
\centering
\begin{subfigure}{.5\textwidth}
  \centering
  \includegraphics[width=0.9\linewidth]{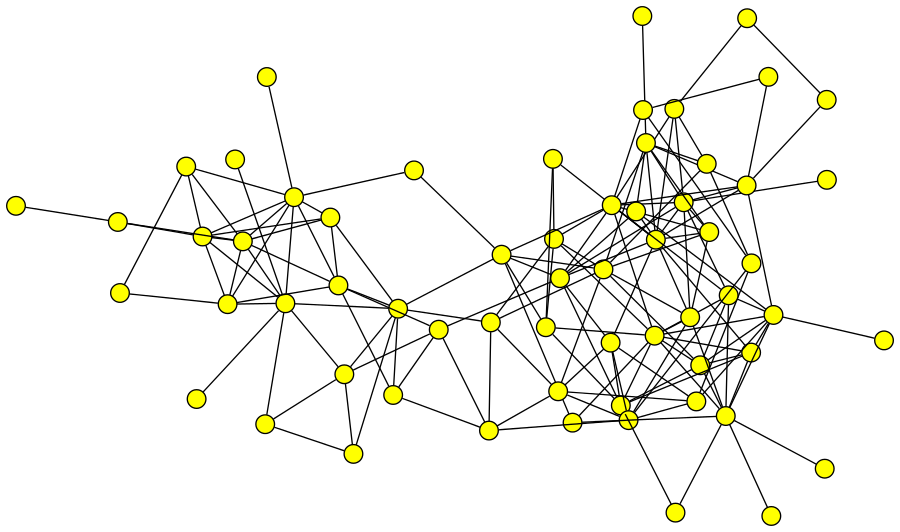}
  \caption{Dolphin social network.}
  \label{fig:dolphins}
\end{subfigure}%
\begin{subfigure}{.5\textwidth}
  \centering
  \includegraphics[width=1.0\linewidth]{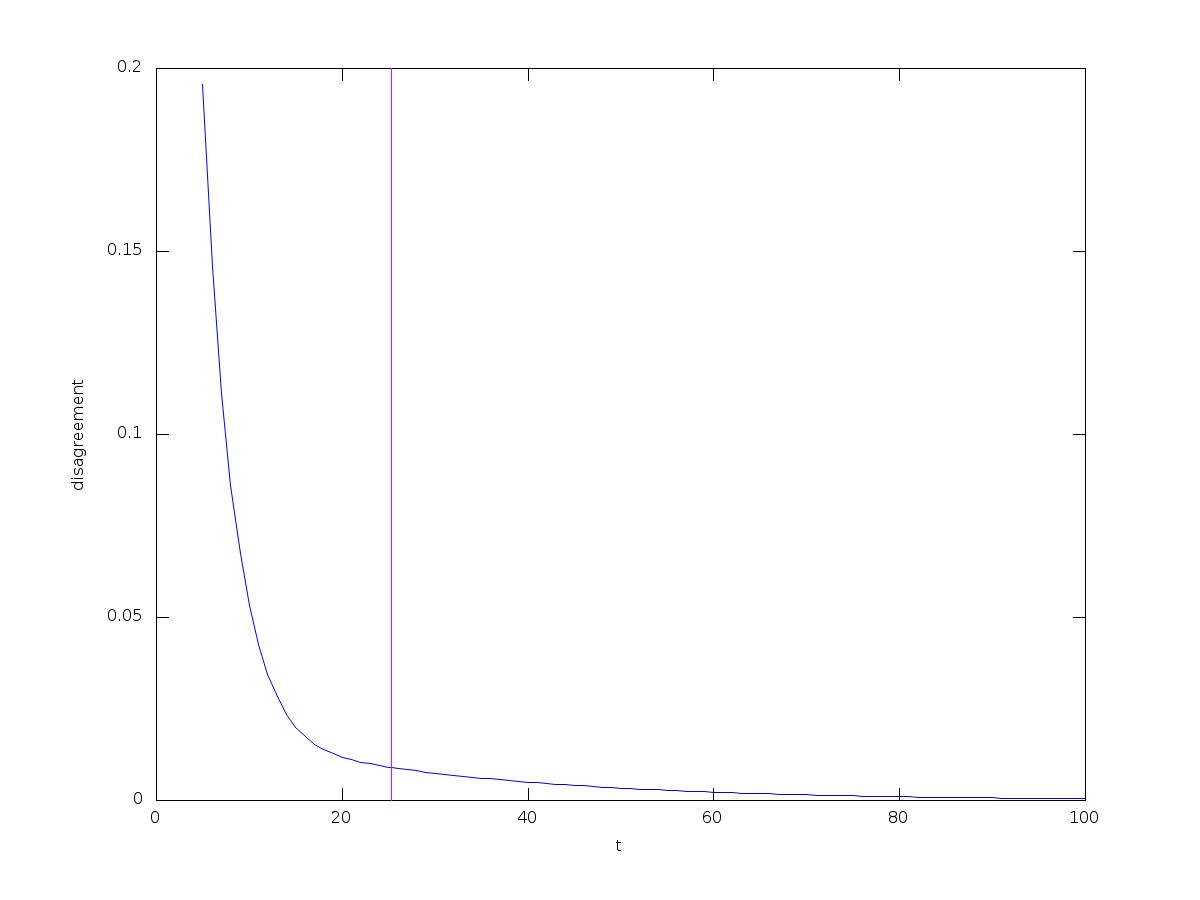}
  \caption{Total disagreement over varying times $t$.  Disagreement is computed in terms of the weighted average consensus $\chi_w(x_0)$.  The red line denotes the data point for $t=1/\lambda_1$.}
  \label{fig:tconverge}
\end{subfigure}
\caption{Weighted average consensus convergence results.}
\end{figure*}

\subsection{An Algorithm for Computing Consensus Value Using Approximate Heat Kernel Pagerank}
\label{sec:consensusapprox}
Our weighted average consensus algorithm uses an algorithm for approximating heat kernel pagerank as a subroutine.  
We use the following definition of an approximate heat kernel pagerank.

\begin{definition}
Let $f\in\R^n$ be a vector over nodes of a graph $G = (V,E)$ and let $\hkpr$ be the heat kernel pagerank vector over $G$ according to $t$ and $f$.  
Then we say that $\nu \in \mathbb{R}^n$ is an \emph{$\epsilon$-approximate heat kernel pagerank vector} if
\begin{enumerate}
\item for every node $v_i \in V$ in the support of $\nu$, \newline$(1-\epsilon)\hkpr[i] -\epsilon \leq \nu[i] \leq (1+\epsilon)\hkpr[i]$, and
\item for every node with $\nu[i] = 0$, it must be that $\hkpr[i] \leq \epsilon$.
\end{enumerate}
\end{definition}

\begin{theorem}[Weighted Average Consensus in Sublinear Time]\label{thm:hkconsensus}
Let $G_x$ be a dynamic $n$-agent system and suppose each node of $G$ applies the distributed linear protocol (\ref{eq:weightedavgprot}).
Then the state of the system can be approximated to within a multiplicative factor of $(1+\epsilon)$ and an additive term of $O(\epsilon)$ for any $0<\epsilon<1$ in time $\hkprcomplexity$. 
\end{theorem}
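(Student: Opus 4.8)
The plan is to combine the exact representation of the state from Theorem~\ref{thm:maintruesolution} with the approximation guarantee of the heat kernel pagerank approximator \hkpralg. First I would fix a target time $T = O(1/\lambda_1)$ at which the protocol has essentially reached its equilibrium, and recall from Theorem~\ref{thm:maintruesolution} that the true state vector satisfies $x(T) = \rho_{T,f}D^{-1}$ with preference vector $f = x(0)^{tr}D$. This reduces the task of approximating $x(T)$ to that of approximating the single heat kernel pagerank vector $\rho_{T,f}$, for which \hkpralg~is designed.

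Next I would invoke \hkpralgparams, which returns an $\epsilon$-approximate heat kernel pagerank vector $\nu = \hat{\rho}_{T,f}$ in time $\hkprcomplexity$, and define the approximate state as $\hat{x}(T) = \nu D^{-1}$. The core of the argument is to propagate the entrywise guarantee from the definition of an $\epsilon$-approximate vector through the diagonal rescaling by $D^{-1}$. For a node $v_i$ in the support of $\nu$, the definition gives $(1-\epsilon)\rho_{T,f}[i] - \epsilon \le \nu[i] \le (1+\epsilon)\rho_{T,f}[i]$; dividing by $d_i$ and using $x(T)[i] = \rho_{T,f}[i]/d_i$ yields $(1-\epsilon)x(T)[i] - \epsilon \le \hat{x}(T)[i] \le (1+\epsilon)x(T)[i]$, since the additive error $\epsilon/d_i$ is at most $\epsilon$. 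For a node with $\nu[i] = 0$ the second condition guarantees $\rho_{T,f}[i] \le \epsilon$, hence $x(T)[i] \le \epsilon$, so that coordinate is absorbed into the additive $O(\epsilon)$ slack. Combining the two cases gives the claimed coordinatewise multiplicative $(1+\epsilon)$ and additive $O(\epsilon)$ bound.

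The running time then follows immediately: the only nontrivial computation is the single call to \hkpralg, while forming $\nu D^{-1}$ costs at most time linear in the (sublinear-sized) support of $\nu$, so the total running time is $\hkprcomplexity$.

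I expect the main obstacle to be the error propagation under the degree rescaling, specifically verifying that the additive term does not amplify when multiplying by $D^{-1}$. Because $G$ is connected, every degree satisfies $d_i \ge 1$, so the additive error $\epsilon/d_i$ shrinks rather than grows, which is exactly what keeps the bound at $O(\epsilon)$; if degrees below one were permitted this step would fail. A secondary point to check is the choice of $T$: one must confirm that $T = O(1/\lambda_1)$ is large enough that $x(T)$ is the intended near-consensus state, which relies on the spectral-gap bound noted after Theorem~\ref{thm:maintruesolution} rather than on the approximation algorithm itself.
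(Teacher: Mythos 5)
Your proposal is correct and follows essentially the same route as the paper: represent the state as $x(T)=\rho_{T,f}D^{-1}$ with $f=x(0)^{tr}D$ via Theorem~\ref{thm:maintruesolution}, invoke \hkpralg~for the $\epsilon$-approximate vector, and charge the running time to Theorem~\ref{thm:approxhkprsummary}. The only difference is that you explicitly propagate the entrywise error through the $D^{-1}$ rescaling (using $d_i\ge 1$ so the additive term does not amplify), a step the paper's proof leaves implicit, so your write-up is if anything more complete.
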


We call the algorithm \consensusalg.
The algorithm makes a call to \hkpralg, an extension of the algorithm presented in~\cite{cs:hklinear:13} for quickly computing an approximation of a restricted heat kernel pagerank vector.
For the sake of completeness, the algorithm and a summary of the results of~\cite{cs:hklinear:13} are given at the end of this section.

\begin{proof}[Proof of Theorem~\ref{thm:hkconsensus}]
First, the $\epsilon$-approximate vector $x$ returned by \hkpralg~ is an approximation of the true state by Theorem~\ref{thm:maintruesolution}.
Thus we have left to verify the approximation guarantee and the running time.
The total running time is dominated by the heat kernel pagerank approximation, which is $\hkprcomplexity$ by Theorem~\ref{thm:approxhkprsummary}, below.
Theorem~\ref{thm:approxhkprsummary} also verifies the approximation guarantee.
\end{proof}

\begin{center}
\begin{pseudocode}[framebox]{\consensusalg}{G,x,t,\epsilon}
\COMMENT{\underline{input}:}\\
\COMMENT{~~G as the $(0,1)$-adjacency matrix}\\
\COMMENT{~~$x$, initial state vector}\\
\COMMENT{~~$t\in\R^+$}\\
\COMMENT{~~$0<\epsilon<1$, error parameter}\\
\COMMENT{\underline{output}: $x(t)$}\\
D \GETS \mbox{diagonal matrix of rowsums(G)}\\
f \GETS x^TD\\
y \GETS \hkpralgparams\\
\RETURN {yD^{-1}}
\end{pseudocode}
\end{center}

\subsection{A Sublinear Time Heat Kernel Pagerank Approximation Algorithm}
The analysis for approximating heat kernel pagerank follows easily from that for a restricted heat kernel pagerank vector by considering the entire vertex set rather than a subset.
We refer the reader to~\cite{cs:hklinear:13} for a more complete description.

\begin{algorithm}[H]
\floatname{algorithm}{Algorithm}
\caption{\hkpralgparams}
\label{}
\algblock[Name]{Start}{End}
input: a graph $G$, $t\in\R^+$, preference vector $f\in\R^n$, error parameter $0 < \epsilon < 1$.\\
output: $\rho$, an $\epsilon$-approximation of $\hkpr$.\\
\begin{algorithmic}
  \State initialize $0$-vector $\rho$ of dimension $n$, where $n=|V|$
  \State $r \gets \rparam$
  \State $K \gets \kparam$
  \State $f' \gets f/\onenorm{f}$
  \Comment{\textit{normalize $f$ to be a probability distribution vector}}
  \For {$r$ iterations}
    \State choose a starting vertex $u$ according to the distribution vector $f'$
    \Start
      \State simulate a $P$ random walk where $k$ steps are taken with probability $e^{-t}\frac{t^k}{k!}$ and $k\leq K$
      \State let $v$ be the last vertex visited in the walk
      \State $\rho[v] \gets \rho[v] + 1$
    \End
  \EndFor\\
  \State $\rho \gets 1/r \cdot \rho$\\
  \Return $\onenorm{f}\cdot\rho$
\end{algorithmic} 
\end{algorithm}

\begin{theorem}\label{thm:approxhkprsummary}
Let $G$ be a graph, $t\in\R^+$, and $f\in\R^n$.
Then, the algorithm \hkpralgparams outputs an $\epsilon$-approximate vector $\hkapprox$ of the heat kernel pagerank $\hkpr$ for $0 < \epsilon < 1$ with probability at least $1-\epsilon$.
The running time of \hkpralg~is $\hkprcomplexity$.
\end{theorem}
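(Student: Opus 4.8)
The plan is to treat \hkpralg~as a Monte Carlo estimator for the random-walk expansion of heat kernel pagerank and to control its error by a Chernoff bound. Recall that
\[
\hkpr = \sum_{k=0}^{\infty} e^{-t}\frac{t^k}{k!} f P^k = \onenorm{f}\sum_{k=0}^{\infty} e^{-t}\frac{t^k}{k!} f' P^k,
\]
so $\hkpr/\onenorm{f}$ is exactly the distribution of the endpoint of the following experiment: draw a start vertex $u$ from $f'$, draw a step count $k$ from the Poisson distribution with mean $t$, and run a $P$-random walk for $k$ steps. Each of the $r$ iterations performs precisely this experiment, except that $k$ is capped at $K$. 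First I would fix a vertex $v$ and let $X_i$ be the indicator that the $i$-th walk terminates at $v$; then $\rho[v]=\frac1r\sum_{i=1}^r X_i$ is an empirical mean of i.i.d.\ $\{0,1\}$ variables, and the output coordinate is $\onenorm{f}\,\rho[v]$.

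Second, I would bound the bias introduced by truncating the walk length at $K=\kparam$. The endpoint distribution of the capped experiment differs from $\hkpr/\onenorm{f}$ only through the discarded Poisson tail $\sum_{k>K} e^{-t}\frac{t^k}{k!}$, and summing over all vertices shows the total error is exactly this tail mass. The point of the choice of $K$ is that $K!\gtrsim\epsilon^{-1}$ (since $\log K!\approx K\log K\approx\log(\epsilon^{-1})$), which makes the tail at most $\epsilon$; hence $\E[X_i]$ agrees with $(\hkpr/\onenorm{f})[v]$ up to an additive $\epsilon$, and this slack is absorbed into the additive $\epsilon$ term of the approximation guarantee.

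Third comes the heart of the argument, the concentration step. Writing $p_v:=\E[X_i]$, a Chernoff bound gives $\Pr[\,|\rho[v]-p_v|>\epsilon p_v\,]\le 2\exp(-\epsilon^2 r p_v/3)$. I would split on the size of $p_v$. When $p_v\ge\epsilon$, the exponent is at least $\epsilon^3 r/3=\tfrac{16}{3}\log n$ for $r=\rparam$, so the multiplicative $(1\pm\epsilon)$ bound holds for this coordinate with probability $1-o(n^{-1})$; this is exactly why the sample count carries a factor $\epsilon^{-3}$ rather than the usual $\epsilon^{-2}$, since the multiplicative guarantee must survive down to values of order $\epsilon$. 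When $p_v<\epsilon$ the multiplicative bound may fail but the deviation is only $O(\epsilon)$, which the additive term covers; this small-value case also yields the second clause of the definition, since a coordinate set to zero forces $p_v$, and hence $\hkpr[v]$, to be small with high probability. A union bound over the $n$ vertices then drives the total failure probability below $\epsilon$, giving the claimed $1-\epsilon$ success probability; rescaling the empirical distribution by $\onenorm{f}$ transfers the guarantee from $\hkpr/\onenorm{f}$ to $\hkpr$ itself.

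Finally, the running time is immediate from the loop structure: there are $r=\rparam$ iterations, each simulating a walk of at most $K=\kparam$ steps, and each step costs $O(1)$ to sample a neighbor, for a total of $O(rK)=\hkprcomplexity$. The step I expect to be most delicate is the concentration argument, specifically arranging the case split so that the multiplicative bound holds precisely where $p_v$ is not too small while the additive term cleanly covers the remainder; lining up the thresholds and the constant in $r$ so that the failure probability is genuinely below $\epsilon$ is where the bookkeeping must be done carefully.
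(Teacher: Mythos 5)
The paper itself contains no proof of this theorem: it is quoted as a summary of results from the cited reference on heat kernel pagerank approximation, with the remark that the analysis ``follows easily'' from the restricted case treated there. Your reconstruction --- view each iteration as one draw from the endpoint distribution of a Poisson-length random walk started from $f'$, bound the truncation bias from capping the walk length at $K$, apply a multiplicative Chernoff bound per coordinate with a case split at $p_v\ge\epsilon$ versus $p_v<\epsilon$, and union bound over the $n$ vertices --- is exactly the argument used in that reference, and your accounting of the running time $O(rK)=\hkprcomplexity$ and of why $r$ carries an $\epsilon^{-3}$ factor (the multiplicative guarantee must hold down to coordinates of size $\epsilon$) is correct.

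One step, as you state it, does not go through for arbitrary $t\in\R^+$: you bound the discarded Poisson tail $\sum_{k>K}e^{-t}\frac{t^k}{k!}$ by arguing $K!\gtrsim\epsilon^{-1}$, but the relevant quantity is roughly $\frac{t^{K+1}}{(K+1)!}$, and the numerator cannot be ignored. Since $K=\kparam$ does not depend on $t$, for $t$ much larger than $K$ the Poisson mass concentrates near $k\approx t$ and the tail is close to $1$, so the additive error is not $O(\epsilon)$. The bound you want holds only for $t=O(1)$ (or if $K$ is replaced by something like $\max(t,\kparam)$ up to constants); this caveat is inherited from the cited source and from the theorem statement here, which places no restriction on $t$, but your write-up should flag the dependence rather than assert the tail bound unconditionally. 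A second, smaller point: the sampling interpretation requires $f$ to be entrywise nonnegative so that $f/\onenorm{f}$ is a probability distribution; in the consensus application $f=x(0)^{tr}D$ need not be nonnegative, and handling signed $f$ (e.g., by splitting into positive and negative parts) is not addressed in your sketch or in the paper.
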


\section{Heat Kernel Pagerank for Consensus in Leader-Following Formations}
\label{sec:leaderfollow}
In this section we consider a multi-agent network in which a certain subset of agents $l\subset V$ are \emph{leaders}, and the rest $f = V\setminus l$ are dubbed \emph{followers}.
In this scenario, leaders will adjust their values according to individual protocol, while followers in the system adjust according to communication channels as usual.
The consensus goal in this case is a \emph{leader-following consensus}, in which all agents agree on a value by following the leaders.

Let $u^f$ denote the protocol among the set of followers and let $u^l$ denote the control dictated by the leaders and influencing the followers.
Similarly, let $x^f$ denote the state of the followers and $x^l$ denote the state of the leaders.
The vectors $x^f$ and $x^l$ can be understood as the usual state vector $x$ restricted to following and leading agents, respectively.
Then we have the following definition.

\begin{definition}[Leader-following consensus]
A \emph{leader-following consensus} of a system is achieved if for every agent $v_i$ there is a local protocol $u_i$ %of the form
%\[
%\dot{x}^l = f(x^l,u^l), ~~~\dot{x}^f = f(x,u), ~~~x(0)\in\R^n
%\]
such that $x_i(T) = \chi_{lf}(x(0))$ for some finite time $T>0$ and some operator $\chi_{lf}:\R^n \rightarrow \R$.
In this case, we call the value $\chi_{lf}(x(0))$ the leader-following consensus value.
\end{definition}

For the protocol
\begin{equation}\label{eq:normalizedprot}
u_i(t) = 1/d_i \sum\limits_{j\in N_i} \Bigg(\sqrt{\frac{d_i}{d_j}}x_j(t) - x_i(t)\Bigg),
\end{equation}
the value for $x$ is given by the dynamics
\[
u(t) = \dot{x}(t) = -\L x(t).
\]
We let the followers abide by protocol (\ref{eq:normalizedprot}).

Let $\L_f$ be the Laplacian $\L$ restricted to rows and columns corresponding to the followers, and $\L_{fl}$ be $\L$ with rows restricted to the followers and columns restricted to the leaders.
Then the dynamics of the followers can be summarized by:
\[
\dot{x}^f(t) = -\L_f x^f(t) - \L_{fl}u^l(t).
\]
Since $\dot{x}^f$ is control of the subnetwork induced by the group of followers, this can be rewritten
\begin{align*}
u^f &= -\L_f x^f - \L_{fl}u^l~~~\mbox{ or alternatively, }\\
x^f &= -\L_f^{-1}u^f + \L_f^{-1}\L_{fl}u^l.
\end{align*}

Indeed, as long as the subgraph induced by the subset of followers is connected, the inverse $\L_f^{-1}$ exists.
We have arrived at the following.

\begin{theorem}
Let $G_x$ be a dynamic multi-agent system with proper subsets of leaders, $l\subset V$, and followers, $f = V\setminus l$, such that the induced subgraph on $f$ is connected.
Suppose the followers apply the protocol (\ref{eq:normalizedprot}), and suppose the leaders apply some individual protocol $u_i = f(x_i)$ dictated only by that leader's state.
Then the followers' state values $x^f$ at time $t$ are given by the solution to the system
\[
\L_f x^f(t) = b(t), ~~~\mbox{ where }~~~ b(t)=-(u^f(t) + \L_{fl}u^l(t)).
\]
\end{theorem}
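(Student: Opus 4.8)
The plan is to obtain the stated linear system as an algebraic rearrangement of the block form of the follower dynamics, and then to establish that $\L_f$ is invertible so that this system determines $x^f$ uniquely. Most of the theorem is bookkeeping; the one substantive point is the well-posedness of the resulting system, which is where the hypotheses on $l$ and $f$ are used.

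First I would verify that the protocol (\ref{eq:normalizedprot}) applied at a follower $v_i$ reproduces the $i$-th coordinate of $-\L x$. Writing $\L = I - D^{-1/2}AD^{-1/2}$, the $i$-th entry of $-\L x$ equals $\frac{1}{\sqrt{d_i}}\sum_{j\in N_i}\frac{1}{\sqrt{d_j}}x_j - x_i$; expanding the sum in (\ref{eq:normalizedprot}) and using $|N_i| = d_i$ collapses it to exactly this expression, so $u_i = -(\L x)_i$ for every follower $v_i$. This recovers the claim, already noted in the text, that the followers obey $\dot{x} = -\L x$.

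Next I would partition the index set into followers and leaders and split $\L$ into the corresponding blocks $\L_f$ and $\L_{fl}$. Because the sum in the protocol at $v_i$ ranges over all of $N_i$ --- both followers and leaders --- restricting the follower rows of $-\L x$ separates into a follower-to-follower contribution $-\L_f x^f$ and a follower-to-leader contribution; writing $u^l$ for the leader states felt by the followers (their external influence) this is $\dot{x}^f = -\L_f x^f - \L_{fl}u^l$. Setting $u^f = \dot{x}^f$ and moving the follower-to-follower term across gives $\L_f x^f = -(u^f + \L_{fl}u^l) = b$, which is the claimed system.

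The step I expect to require real care is showing that $\L_f$ is invertible, so that $x^f = \L_f^{-1}b$ is genuinely \emph{the} solution rather than merely a vector satisfying a constraint. As a principal submatrix of the positive semidefinite matrix $\L$, the block $\L_f$ is itself positive semidefinite, since $(y^f)^{tr}\L_f y^f = y^{tr}\L y \geq 0$ for the vector $y$ obtained by extending $y^f$ with zeros on the leaders. For strict definiteness I would invoke connectivity of $G$: the null space of $\L$ is then one-dimensional and spanned by $D^{1/2}\mathbf{1}$, whose entries are all strictly positive. A nonzero vector supported only on $f$ cannot be a multiple of $D^{1/2}\mathbf{1}$ because $l = V\setminus f$ is nonempty, so $(y^f)^{tr}\L_f y^f = 0$ forces $y^f = 0$. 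Hence $\L_f$ is positive definite and invertible, and $x^f$ is uniquely determined by $\L_f x^f = b$. In fact this argument uses only that $G$ is connected and that $l$ is nonempty, so the invertibility persists even without the full strength of the stated hypothesis that the induced subgraph on $f$ be connected.
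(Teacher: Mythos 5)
Your proof is correct and follows essentially the same route as the paper: the paper derives this theorem by exactly the block decomposition you describe, restricting the rows of $\dot{x} = -\L x$ to the followers to get $\dot{x}^f = -\L_f x^f - \L_{fl}u^l$, identifying $u^f$ with $\dot{x}^f$, and rearranging to $\L_f x^f = -(u^f + \L_{fl}u^l)$; there is no separate proof environment, so the derivation in the text is the whole argument. The one place you go beyond the paper is the invertibility of $\L_f$. The paper merely asserts that $\L_f^{-1}$ exists ``as long as the subgraph induced by the subset of followers is connected,'' whereas you supply the standard argument that a proper principal submatrix of the Laplacian of a connected graph is positive definite: $(y^f)^{tr}\L_f y^f = y^{tr}\L y \geq 0$ for the zero-extension $y$, and equality forces $y$ into the span of the everywhere-positive vector $D^{1/2}\mathbf{1}$, which no nonzero vector vanishing on the nonempty set $l$ can occupy. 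Your closing remark is also accurate and worth keeping: the operative hypotheses are connectivity of $G$ (assumed globally in the paper) and $l \neq \emptyset$, not connectivity of the induced subgraph on $f$, so your argument both fills the gap the paper leaves as an assertion and identifies the condition actually doing the work.
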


An efficient algorithm called \greensalg~ for solving linear systems $\L_f x^f = b$ with a linear protocol applied to a subset specified by $b$ is given in~\cite{cs:hklinear:13}.
They show that the solution $x_f$ can be computed with the symmetric heat kernel using the relationship
\begin{equation}\label{eq:inversegreen}
\L_f^{-1}b = \int_{0}^{\infty}(\H_t)_f ~\mathrm{d}t~ b,
\end{equation}
where $(\H_t)_f$ is $\H_t$ with rows and columns restricted to the set $f$.

The solution $\L_f^{-1}b$ can be approximated by sampling sufficiently many values of $(\H_t)_f b(t)$.
Further, it is given that the solution can be approximated in $\greenscomplexity$ time, where $s$ is the size of the subset of followers.

\begin{center}
\begin{pseudocode}[framebox]{\lfconsensusalg}{G,x,t,f,l,u^l,\epsilon}
\label{alg:lfcons}
\COMMENT{\underline{input}:}\\
\COMMENT{~~G as the $(0,1)$}\\
\COMMENT{~~$x$, initial state vector}\\
\COMMENT{~~$t\in\R^+$}\\
\COMMENT{~~$f$, subset of followers}\\
\COMMENT{~~$l$, subset of leaders}\\
\COMMENT{~~$u^l$, protocol applied by the leaders}\\
\COMMENT{~~$0<\epsilon<1$, error parameter}\\
\COMMENT{\underline{output}: $x(t)$}\\
\PROCEDURE{FollowerProt}{G,x,t}
\FOREACH i\in l \DO
    u^f[i] \GETS u_i(t) = 1/d_i \sum\limits_{j\in N_i} \Big(\sqrt{\frac{d_i}{d_j}}x_j(t) - x_i(t)\Big)\\
\RETURN{u^f}
\ENDPROCEDURE
\PROCEDURE{b}{t}
    u^f \GETS \CALL{FollowerProt}{G,x,t}\\ 
    b \GETS -(u^f(t) + \L_{fl}u^l)\\
\ENDPROCEDURE
b \GETS \CALL{b}{t}\\
s \GETS |f|\\
T \GETS \tparamsolver\\
N \GETS T/\epsilon\\
r \GETS \rparamsolver\\
\mbox{initialize a }0-\mbox{vector }x^f\mbox{ of dimension }s\\
\FOR i=0 \TO r \DO
    \BEGIN
    \mbox{draw $j$ from $[1,N]$ uniformly at random}\\
    x_i \GETS \texttt{ApproxHK}(G,jT/N,b,f,\epsilon)\\
    x^f \GETS x^f + x_i\\
    \END\\
\RETURN{\frac{1}{r}xD_S^{-1/2}}
\end{pseudocode}
\end{center}

The running time and approximation guarantees of \lfconsensusalg~follow from the running time of \greensalg~\cite{cs:hklinear:13}, and we have the following:

\begin{theorem}[Leader-Following Consensus in Sublinear Time]
\label{thm:lfconsensus}
Let $G_x$ be a dynamic multi-agent system with proper subsets of leaders, $l\subset V$, and followers, $f = V\setminus l$, such that the induced subgraph on $f$ is connected.
Suppose the followers apply the protocol (\ref{eq:normalizedprot}), and suppose the leaders apply some individual protocol $u_i = f(x_i)$ dictated only by that leader's state.
Then the state of the system can be approximated to within a multiplicative factor of $(1+\epsilon)$ and an additive term of $O(\epsilon)$ for any $0<\epsilon<1$ in time $\greenscomplexity$, where $s$ is the size of the subset of followers.
\end{theorem}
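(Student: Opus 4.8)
The plan is to follow the same reduction used in the proof of Theorem~\ref{thm:hkconsensus}: reduce the statement to the correctness and running time of the subroutine that \lfconsensusalg~invokes. By the preceding theorem, the true state of the followers at time $t$ solves the linear system $\L_f x^f(t) = b(t)$ with $b(t) = -(u^f(t) + \L_{fl}u^l(t))$, so the exact solution is $x^f = \L_f^{-1}b$. Because the subgraph induced on $f$ is connected, $\L_f$ is positive definite and hence invertible, and the Green's function identity (\ref{eq:inversegreen}) rewrites $\L_f^{-1}b$ as the integral $\int_0^\infty (\H_t)_f\,\mathrm{d}t\,b$. It therefore suffices to show that \lfconsensusalg~returns an approximation of this integral with the stated error and within the stated time, which is exactly the guarantee of \greensalg~from~\cite{cs:hklinear:13}.

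For correctness I would first observe that the algorithm builds $b$ from the follower and leader protocols and then forms a Monte Carlo estimate of (\ref{eq:inversegreen}): it truncates the integral at $T = \tparamsolver$, discretizes $[0,T]$ into $N = T/\epsilon$ points, and over $r = \rparamsolver$ iterations draws a time $jT/N$ uniformly and calls \texttt{ApproxHK} to approximate $(\H_{jT/N})_f b$. Averaging these samples and rescaling by the diagonal factor $D_S^{-1/2}$ returns the estimate of $x^f$. The overall error then splits into three pieces that each must be controlled at the $O(\epsilon)$ scale: the truncation error from discarding the tail $\int_T^\infty (\H_t)_f\,\mathrm{d}t\,b$, the discretization error from replacing the integral on $[0,T]$ by a finite Riemann-type sum, and the sampling error incurred both by \texttt{ApproxHK} per call and by the Monte Carlo average over the $r$ draws.

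The main obstacle is the tail-truncation bound, which is where spectral analysis is essential. Since $(\H_t)_f = e^{-t\L_f}$, the tail decays at the rate set by the smallest eigenvalue $\lambda_f$ of the restricted Laplacian $\L_f$; a connected $s$-vertex follower subgraph admits a lower bound on $\lambda_f$ that is polynomial in $s^{-1}$ (via a Cheeger- or diameter-type estimate), and the choice $T = \tparamsolver$ is calibrated to this bound so that the discarded tail contributes at most $O(\epsilon)$. Combining this with a Lipschitz estimate on the integrand to bound the discretization error, and with the per-call guarantee of \texttt{ApproxHK} together with a standard concentration argument over the $r$ samples, yields the claimed $(1+\epsilon)$-multiplicative, $O(\epsilon)$-additive approximation. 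Finally, the running time is dominated by the $r$ calls to \texttt{ApproxHK}; multiplying the per-call cost by $r = \rparamsolver$ gives the stated bound $\greenscomplexity$, completing the proof.
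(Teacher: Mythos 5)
Your proposal is correct and follows essentially the same route as the paper, which proves Theorem~\ref{thm:lfconsensus} simply by delegating both the approximation guarantee and the running time to the analysis of \greensalg~in~\cite{cs:hklinear:13}. Your additional decomposition of the error into truncation, discretization, and Monte Carlo sampling terms (with the tail controlled by a spectral lower bound on $\L_f$ calibrated to $T=\tparamsolver$) is a faithful reconstruction of the internals of that cited analysis, which the paper itself leaves entirely implicit.
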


\section{Discussion}\label{sec:discussion}
The significance of sublinear running times is scalability.
The robustness and efficiency of the algorithms \consensusalg~and \lfconsensusalg~are of great importance for networks too large to fit in memory, and the running time/approximation tradeoff allows for appropriate tuning.
This is especially notable for local algorithms, which reduce computation over large networks to a small subset.
For instance, while the group of leaders may be small in a leader-following framework, the difference in complexity for computing consensus in a leader-following formation as opposed to full group consensus can be significant.
%This is the motivation behind local algorithms.
The subset of followers influenced by the leaders may be a small portion of the entire graph, so that $s<<n$, and we are spared work over the entire graph in the case that we are interested in only a small area.
In these cases, the gain in running times are valueable.

%As for future work, a number of variations of models for communication channels and consensus protocols have been proposed.
%A natural extension to this work is an algorithm which quickly approximates consensus in directed networks.
%Linear consensus protocols have been developed in this respect and rely on the normalized Laplacian for a directed graph.
%There is limited literature on the normalized Laplacian for directed graphs, though they are concretely defined and introduced in~\cite{chung:dirlaplacian:05}.
%It would be interesting to compare the rates of diffusion and computational complexity over directed versus undirected graphs.

\addtolength{\textheight}{-12cm}   % This command serves to balance the column lengths
                                  % on the last page of the document manually. It shortens
                                  % the textheight of the last page by a suitable amount.
                                  % This command does not take effect until the next page
                                  % so it should come on the page before the last. Make
                                  % sure that you do not shorten the textheight too much.

%%%%%%%%%%%%%%%%%%%%%%%%%%%%%%%%%%%%%%%%%%%%%%%%%%%%%%%%%%%%%%%%%%%%%%%%%%%%%%%%

%%%%%%%%%%%%%%%%%%%%%%%%%%%%%%%%%%%%%%%%%%%%%%%%%%%%%%%%%%%%%%%%%%%%%%%%%%%%%%%%

%%%%%%%%%%%%%%%%%%%%%%%%%%%%%%%%%%%%%%%%%%%%%%%%%%%%%%%%%%%%%%%%%%%%%%%%%%%%%%%%
%\section*{APPENDIX}
%
%Appendixes should appear before the acknowledgment.
%
%\section*{ACKNOWLEDGMENT}
%
%The preferred spelling of the word \D2acknowledgment\D3 in America is without an \D2e\D3 after the \D2g\D3. Avoid the stilted expression, \D2One of us (R. B. G.) thanks . . .\D3  Instead, try \D2R. B. G. thanks\D3. Put sponsor acknowledgments in the unnumbered footnote on the first page.
%

%%%%%%%%%%%%%%%%%%%%%%%%%%%%%%%%%%%%%%%%%%%%%%%%%%%%%%%%%%%%%%%%%%%%%%%%%%%%%%%%

{\footnotesize
\bibliographystyle{abbrv}
\bibliography{consensus}

\begin{thebibliography}{10}

\bibitem{bp:websearchanatomy:isdn98}
S.~Brin and L.~Page.
\newblock The anatomy of a large-scale hypertextual web search engine.
\newblock {\em Computer Networks and ISDN Systems}, pages 107--117, 1998.

\bibitem{cai2007spectral}
D.~Cai, X.~He, and J.~Han.
\newblock Spectral regression for efficient regularized subspace learning.
\newblock In {\em IEEE 11th International Conference on Computer Vision}, pages
  1--8. IEEE, 2007.

\bibitem{csz:spectralkway:94}
P.~K. Chan, M.~D. Schlag, and J.~Y. Zien.
\newblock Spectral k-way ratio-cut partitioning and clustering.
\newblock {\em IEEE Transactions on Computer-Aided Design of Integrated
  Circuits and Systems}, 13(9):1088--1096, 1994.

\bibitem{cht:highorder:13}
L.~Cheng, Z.-G. Hou, and M.~Tan.
\newblock Reaching a consensus in networks of high-order integral agents under
  switching directed topology.
\newblock {\em arXiv preprint arXiv:1304.3972}, 2013.

\bibitem{ckmst:electric:10}
P.~Christiano, J.~A. Kelner, A.~Madry, D.~A. Spielman, and S.-H. Teng.
\newblock Electrical flows, laplacian systems, and faster approximation of
  maximum flow in undirected graphs.
\newblock {\em CoRR}, abs/1010.2921, 2010.

\bibitem{chung:hkpr:07}
F.~Chung.
\newblock The heat kernel as the pagerank of a graph.
\newblock {\em Proceedings of the National Academy of Sciences},
  104(50):19735--19740, 2007.

\bibitem{chung:partitionhkpr:im09}
F.~Chung.
\newblock A local graph partitioning algorithm using heat kernel pagerank.
\newblock {\em Internet Mathematics}, 6(3):315--330, 2009.

\bibitem{cs:hklinear:13}
F.~Chung and O.~Simpson.
\newblock Solving linear systems with boundary conditions using heat kernel
  pagerank.
\newblock In {\em Algorithms and Models for the Web Graph}, pages 203 -- 219,
  2013.

\bibitem{cb:dynamicalsystems:05}
J.~Cortés and F.~Bullo.
\newblock Coordination and geometric optimization via distributed dynamical
  systems.
\newblock {\em SIAM Journal on Control and Optimization}, 44(5):1543--1574,
  2005.

\bibitem{fm:cooperationvehicle:04}
J.~Fax and R.~Murray.
\newblock Information flow and cooperative control of vehicle formations.
\newblock {\em IEEE Transactions on Automatic Control}, 49(9):1465--1476, 2004.

\bibitem{kmp:optimalsdd:10}
I.~Koutis, G.~L. Miller, and R.~Peng.
\newblock Approaching optimality for solving sdd linear systems.
\newblock In {\em IEEE 51st Annual Symposium on Foundations of Computer
  Science}, pages 235--244. IEEE, 2010.

\bibitem{kmp:mlognsdd:11}
I.~Koutis, G.~L. Miller, and R.~Peng.
\newblock A nearly-$m \log n$ time solver for sdd linear systems.
\newblock In {\em IEEE 52nd Annual Symposium on Foundations of Computer
  Science}, pages 590--598. IEEE, 2011.

\bibitem{dolphins}
D.~Lusseau, K.~Schneider, O.~Boisseau, P.~Haase, E.~Slooten, and S.~Dawson.
\newblock The bottlenose dolphin community of doubtful sound features a large
  proportion of long-lasting associations.
\newblock {\em Behavioral Ecology and Sociobioloy}, 54:396--405, 2003.

\bibitem{lynch:distributed:96}
N.~A. Lynch.
\newblock {\em Distributed Algorithms}.
\newblock Morgan Kaufmann, 1996.

\bibitem{mtll:dynamicthermal:11}
K.~L. Moore, T.~Vincent, F.~Lashhab, and C.~Liu.
\newblock Dynamic consensus networks with application to the analysis of
  building thermal processes.
\newblock In {\em Proceedings of 18th IFAC World Congress, Milane, Italy},
  2011.

\bibitem{njw:spectralcluster:02}
A.~Y. Ng, M.~I. Jordan, Y.~Weiss, et~al.
\newblock On spectral clustering: Analysis and an algorithm.
\newblock {\em Advances in neural information processing systems}, 2:849--856,
  2002.

\bibitem{nc:leaderfollower:10}
W.~Ni and D.~Cheng.
\newblock Leader-following consensus of multi-agent systems under fixed and
  switching topologies.
\newblock {\em Systems \& Control Letters}, 59(3--4):209--217, 2010.

\bibitem{om:graphridigity:02}
R.~Olfati-Saber and R.~Murray.
\newblock Graph rigidity and distributed formation stabilization of
  multi-vehicle systems.
\newblock In {\em Proceedings of the 41st IEEE Conference on Decision and
  Control}, volume~3, pages 2965--2971, 2002.

\bibitem{om:consensusdynamic:03}
R.~Olfati-Saber and R.~Murray.
\newblock Consensus protocols for networks of dynamic agents.
\newblock In {\em Proceedings of the American Control Conference}, volume~2,
  pages 951--956, 2003.

\bibitem{om:switchingtop:04}
R.~Olfati-Saber and R.~Murray.
\newblock Consensus problems in networks of agents with switching topology and
  time-delays.
\newblock {\em IEEE Transactions on Automatic Control}, 49(9):1520--1533, 2004.

\bibitem{pdl:congestion:01}
F.~Paganini, J.~Doyle, and S.~Low.
\newblock Scalable laws for stable network congestion control.
\newblock In {\em Proceedings of the 40th IEEE Conference on Decision and
  Control}, volume~1, pages 185--190, 2001.

\bibitem{ps:parallelsdd:13}
R.~Peng and D.~A. Spielman.
\newblock An efficient parallel solver for sdd linear systems.
\newblock arXiv:1311.3286, November 2013.

\bibitem{rmm:controlgraphtheory:09}
A.~Rahmani, M.~Ji, M.~Mesbahi, and M.~Egerstedt.
\newblock Controllability of multi-agent systems from a graph-theoretic
  perspective.
\newblock {\em SIAM Journal on Control and Optimization}, 48(1):162--186, 2009.

\bibitem{raj2012network}
A.~Raj, A.~Kuceyeski, and M.~Weiner.
\newblock A network diffusion model of disease progression in dementia.
\newblock {\em Neuron}, 73(6):1204--1215, 2012.

\bibitem{shi2000normalized}
J.~Shi and J.~Malik.
\newblock Normalized cuts and image segmentation.
\newblock {\em IEEE Transactions on Pattern Analysis and Machine Intelligence},
  22(8):888--905, 2000.

\bibitem{st:nearlylinear:04}
D.~A. Spielman and S.-H. Teng.
\newblock Nearly-linear time algorithms for graph partitioning, graph
  sparsification, and solving linear systems.
\newblock In {\em Proceedings of the thirty-sixth annual ACM symposium on
  Theory of Computing}, pages 81--90. ACM, 2004.

\bibitem{tt:flocks:98}
J.~Toner and Y.~Tu.
\newblock Flocks, herds, and schools: A quantitative theory of flocking.
\newblock {\em Phys. Rev. E}, 58:4828--4858, Oct 1998.

\end{thebibliography}
}

\end{document}